\newcommand{\comm}[1]{}
\newcommand{\N}{\mathbb{N}}
\newcommand{\skin}{\mathsf{skin}}
\begin{document}

\title{On the number of useful objects in P systems with active membranes\thanks{Ministry of Human Capacities, Hungary grant 20391-3/2018/FEKUSTRAT is acknowledged. Szabolcs Iv\'an was supported by the J\'anos Bolyai Scholarship of the Hungarian Academy of Sciences.}}

\author{Zsolt Gazdag \and Károly Hajagos\and Szabolcs Iván}
\institute{Institute of Informatics\\
University of Szeged, Szeged, Hungary\\\email{\{gazdag,hajagos,szabivan\}@inf.u-szeged.hu}}

\maketitle

\begin{abstract}
In this paper we investigate the number of objects actually used in the terminating computations of a certain variant of polarizationless P systems with active membranes. The 
P systems we consider here have no in-communication rules and have no different rules triggered by the same object to manipulate the same membrane. 
We show that if we consider such a P system $\Pi$ and its terminating computation $\cal C$, then we can compute the result of $\cal C$ by setting a polynomial upper bound on the content of each region in $\cal C$.  
\end{abstract}

\section{Introduction}
The computational power of P systems with active membranes \cite{pau01} is widely investigated, mainly due to the fact that they can provide efficient solutions to computationally hard problems. The first such solutions include \cite{kriRam99,pau01,perJimSan03,zanFerMau01}, where $\mathbf{NP}$-complete problems were solved using only elementary membrane division. Non-elementary membrane division was also investigated and it turned out that using this type of rules, these P systems can already solve $\mathbf{PSPACE}$-complete problems \cite{alhMarPan03,sos03}.
Since then, many variants of P systems with active membranes were used to solve computationally hard problems, see, for example, \cite{alhPanPau04,alhPer07,gaz04,gazKol12,PanAlh06,PanAlhIsd04,Perez-Jimenez2005,PerRomSan06,sosRod07}. A recent survey can be found in \cite{sos19}. 

In \cite{alhPer07} the authors considered such P systems where non-elementary membrane division was allowed but the use of polarizations was not, and it was shown that this variant is still powerful enough to solve the $\mathbf{PSPACE}$-complete QSAT problem. On the other hand, it is still open, whether $\mathbf{NP}$-complete problems can be solved without polarization using only elementary membrane division. Due to the famous conjecture of  Gh. P\u aun \cite{pau05}, the answer to this question is expected to be negative. To prove this conjecture however is challenging since even a polarizationless P system can produce exponentially many regions and exponentially many objects in the regions in linear time.

Nevertheless, there are several partial solutions to P\u aun's conjecture (which is often called the \emph{P conjecture} in the literature), see, for example,  \cite{gazKol19,gutPerRisRom06,lepManMauPorZan17,murWoo07,wooMurPerRis09}.
In these partial solutions, the authors investigate a certain class of P systems, called \emph{recognizer P systems} \cite{perJimSan03}. Recognizer P systems are common tools when P systems are used to decide problems. They have many useful properties, for example, all of their computations halt and they have two designated objects $yes$ and $no$ which are sent to the environment exactly in the last step of the system. These objects are the outputs of the system and are used to indicate the acceptance or rejection of the input. If we use such a system to decide a problem, we require the system to be confluent meaning that all of the computations with the same input should give the same output. More precisely, given an instance $I$ of a decision problem $L$, a recognizer P system $\Pi$ deciding $L$ should halt on $I$ with the correct output $yes$ or $no$, according to whether $I$ is a positive instance of $L$ or not. That is, even if $\Pi$ can have many different computations on $I$, all of these computations should halt with the same output. Therefore, if we want to tell the output of $\Pi$ on $I$, it is enough to simulate an \emph{arbitrary} computation of $\Pi$. This implies that we can assume that $\Pi$ has no different rules involving the same membrane and triggered by the same object. 
To see this, assume that $\Pi$ has the following two rules $r_1: [a\to u]_1$ and $r_2: [a]_1\to [b]_1[c]_1$ (that is, $r_1$ and $r_2$ are usual evolution and division rules, respectively, with the usual specification of $a,b,c$, and $u$). Clearly, whenever $\Pi$ can apply $r_2$, it can apply $r_1$, too. Therefore, if we drop $r_2$ from $\Pi$, then $\Pi$ still has at least one computation which gives the correct output, for every possible input. According to this, to give a polynomial upper bound on the running time of recognizer P systems, it is enough to consider only certain computations of them, or even we can assume that these systems have the syntactic restrictions described above.
Similar observations are used in some of the partial solutions of the P conjecture in order to simplify the considered P systems. 

Another concept that is frequently arises in this research line is that of the  \emph{dependency graph} \cite{corGutPerRis05} and its variants. Roughly, these graphs describe how an object (or in same cases a configuration) evolves when certain rules are applied on it. For example, in \cite{gazKol19}, object division trees, a restricted variant of dependency graphs, are used to follow the evolution of objects under the application of division rules. 
These graphs are usually used to find such computations of a P system that can be simulated efficiently. For example, in  \cite{lepManMauPorZan17} and \cite{wooMurPerRis09}, the simulated computations are such that only a reasonable small part of them has to be represented in order to determine the result of these computations. In \cite{gazKol19}, the object division trees are used to find such computations which can be simulated by polynomial multiplication.

In the above mentioned partial solutions of the P conjecture, P systems with restricted initial membrane structure were investigated. In this paper we consider P systems with arbitrary initial membrane structure. Our P systems have no in-communication rules and have no different rules involving the same membrane and triggered by the same object. We show that for such a P system, we can compute the result of the terminating computations by setting a polynomial upper bound (depending of the length of the computation) on the content of each region. 
With our result, even if a P system $\Pi$ can produce exponentially many objects in the regions, it is possible to simulate $\Pi$ by keeping only a polynomial number of objects in each region (there can be exponentially many regions of $\Pi$, though).

In the proof of our result, we will need to use precisely such well known notions of Membrane Computing as \emph{maximal parallelism} and the \emph{computation step} of a P system. Thus we will give their formal definitions in the first part of the paper. Moreover, it will be convenient for us to treat a P system such that the rules of the system and the configurations the rules are working on are separated. Thus we will define \emph{membrane grammars}, which consist of similar rules as P systems except that evolution rules have the form $[a]_\ell\to [u]_\ell$ ($\ell$, $a$, and $u$ are specified as usual). Moreover, we will define \emph{membrane configurations} which are nonempty, finite, rooted, directed, edge-labeled trees. The nodes of such a tree will represent the regions of a membrane configuration, while the labeled edges are the membranes between the regions.

The paper is structured as follows. First, we give the necessary notions and notations in the next section. Then, in Section \ref{results}, we give the main results of the paper. We discuss the possible extensions and applications of our results in Section \ref{conclusion}.

\section{Preliminaries}
 
In this section, we introduce the necessary notions and notations. In particular, we define the notion of membrane grammars which is a novel representation of polarizationless P systems with active membranes. Nevertheless, we assume
that the reader is familiar with the basic concepts of membrane computing techniques (for a comprehensive guide see e.g. \cite{HandbookMC10}).

$\N$ stands for the set of natural numbers including zero, and for arbitrary $i,j\in\N, i\leq j,$ $[i,j]$ denotes the set $\{i,\ldots,j\}$. Furthermore, if $i=1$, then $[i,j]$ is denoted by $[j]$.

Let $O$ be an alphabet of objects and $H$ be a set of membrane labels. We assume that $H$ always contains the special label $\skin$. A \emph{membrane structure} is a triple $(V,E,L)$ where $(V,E)$ is a nonempty, finite, rooted, directed tree, having exactly one node in depth one, with edges directed towards the root, and $L:E\to H$ assigns labels to each edge, such that only the unique edge leading to the root can be labeled by the symbol $\skin$, and it has to be labeled by $\skin$. Edges are called \emph{membranes} of the structure, the nodes are its \emph{regions}. The root is also called the \emph{environment}. For each non-environment region $x\in V$, the outgoing edge $(x,y)\in E$ towards the parent of $x$ is called the outer membrane of the region, the edges directed into $x$ are called the inner membranes of $x$. We can assume that initially, each membrane has its unique label. The regions that have only an outgoing edge are the \emph{leaves} of the structure, and a membrane between a leaf and its parent is called an \emph{elementary membrane}.

A \emph{membrane configuration} is a tuple $(V,E,L,\omega)$ where $(V,E,L)$ is a membrane structure and $\omega:V\to O^*$ is a function which assigns a finite word of objects to each region. We view these words as \emph{multisets} of $O$, and also employ the functional notation: if $w\in O^*$ and $o\in O$, then $w(o)$ denotes the number of occurrences of $o$ in $w$. The empty word is denoted by $\varepsilon$. The difference $u-v$ of two words $u$ and $v$ is defined if for each $o\in O$, $u(o)\geq v(o)$, in which case $u-v$ is a word $w$ with
$w(o)+v(o)=u(o)$ for each $o\in O$. Sum of $u$ and $v$ is defined as their concatenation $u+v=uv$. If $t\in\N$, then the multiplication $t\cdot u$, for a word $u$, is a word $v$ with $v(o)=t\cdot u(o)$ for each $i\in O$. 

A \emph{membrane grammar} $G$ over $(O,H)$ is a finite set of rules of the following form:
\begin{align}
[a]_\ell&\to [u]_\ell&&\hbox{for some }a\in O,u\in O^*\hbox{ and }\ell\in H\tag{evolution}\\
[a]_\ell&\to b&&\hbox{for some }a,b\in O\hbox{ and }\ell\in H-\{\skin\}\tag{dissolution}\\
[a]_\ell&\to [b]_{\ell} [c]_{\ell}&&\hbox{for some }a,b,c\in O\hbox{ and }\ell\in H-\{\skin\}\tag{division}\\
[a]_\ell&\to []_\ell b&&\hbox{for some }a,b\in O\hbox{ and }\ell\in H\tag{out}\\
[]_\ell a&\to [b]_\ell&&\hbox{for some }a,b\in O\hbox{ and }\ell\in H-\{\skin\}\tag{in}
\end{align}

\noindent  The semantics of these rules will be described later, when we formally define a computation step of a membrane grammar. We just note here that it will be defined in the same way as in the case of P systems with active membranes \cite{pau01}. 
However, we impose some restrictions on the rules of a membrane grammar as we have discussed it in the Introduction. We require that for each $a\in O$ and $\ell\in H$, there is at most one rule with left-hand side $[a]_\ell$, and there is at most one rule with left-hand side $[]_\ell a$. We say that the pair $(o,\ell)$ is an evolution- / dissolution- / division- / out-pair, if there is an evolution / dissolution / division / out rule with left-hand side $[o]_\ell$, or an in-pair, if there is an in-rule with $[]_\ell o$ on its left-hand side. Division for non-elementary membranes is not allowed.

\begin{example}\label{ex-1}

Let $G$ be a membrane grammar over $(O,H)$ and let $(V,E,L,\omega)$ be a membrane configuration, where
\begin{itemize} 
\item $O=\{o_1,o_2\}$,
\item $H=\{\ell_1,\ldots,\ell_6,\skin\}$,
\item $V=\{env,s,x_1,\ldots,x_6\}$,
\item $E=\{(s,env),(x_1,s),(x_2,s),(x_3,x_1),(x_4,x_1),(x_5,x_2),(x_6,x_2)\}$,
\item and the rules of $G$ are:
\begin{align*}
&[o_1]_{\ell_1}\to []_{\ell_1}o_2&&\hbox{(out)}\\
&[o_2]_{\ell_1}\to o_1&&\hbox{(dissolution)}\\
&[o_1]_{\ell_2}\to[]_{\ell_2}o_1&&\hbox{(out)}\\
&[o_2]_{\ell_2}\to o_1&&\hbox{(dissolution)}\\
&[]_{\ell_3}o_1\to[o_1]_{\ell_3}&&\hbox{(in)}\\
&[]_{\ell_4}o_1\to[o_2]_{\ell_4}&&\hbox{(in)}\\
&[o_1]_{\ell_5}\to[o_2]_{\ell_5}[o_1]_{\ell_5}&&\hbox{(division)}\\
&[o_2]_{\ell_6}\to [o_1o_1]_{\ell_6}&&\hbox{(evolution)}
\end{align*}
\end{itemize}

\noindent Furthermore, let $L:E\to H$ be defined as follows:
\begin{align*}
& L(s,env)=\skin && L(x_1,s)=\ell_1 && L(x_2,s)=\ell_2\\
& L(x_3,x_1)=\ell_3 && L(x_4,x_1)=\ell_4 && L(x_5,x_2)=\ell_5\\& L(x_6,x_2)=\ell_6.
\end{align*}

\noindent Finally, we define $\omega:V\to O^*$ as follows:
\begin{align*}
& \omega(env)=\omega(s)=\varepsilon && \omega(x_1)=o_1o_1o_2 && \omega(x_2)=o_1o_2o_2\\
& \omega(x_3)=o_1o_2 && \omega(x_4)=o_1o_2 && \omega(x_5)=o_1o_1 \\
& \omega(x_6)=o_1o_2.
\end{align*}

\noindent Figure \ref{fig-conf} shows the membrane configuration $(V,E,L,\omega)$. The regions $x_3,x_4,x_5$ and $x_6$ are the leaves, therefore $\ell_3,\ell_4,\ell_5$, and $\ell_6$ are elementary membranes.

\begin{figure}
\centering \begin{tikzpicture}


\node[draw=black,circle,minimum size=3mm,label=env] (env) at (0,0.75) {};
\node[draw=black,circle,minimum size=7mm,label={[xshift=0mm,yshift=-11mm]$s$}] (skin) at (0,-1) {};
\node[draw=black,circle,minimum size=13mm,label={$x_1$}] (x1) at (-3,-2) {$o_1o_1o_2$};
\node[draw=black,circle,minimum size=13mm,label={$x_2$}] (x2) at (3,-2) {$o_1o_2o_2$};
\node[draw=black,circle,minimum size=13mm,label={[xshift=-6mm,yshift=-1mm]$x_3$}] (x3) at (-4.5,-4.5) {$o_1o_2$};
\node[draw=black,circle,minimum size=13mm,label={[xshift=6mm,yshift=-1mm]$x_4$}] (x4) at (-1.5,-4.5) {$o_1o_2$};
\node[draw=black,circle,minimum size=13mm,label={[xshift=-6mm,yshift=-1mm]$x_5$}] (x5) at (1.5,-4.5) {$o_1o_1$};
\node[draw=black,circle,minimum size=13mm,label={[xshift=6mm,yshift=-1mm]$x_6$}] (x6) at (4.5,-4.5) {$o_1o_2$};


\path
(skin) [->] edge node [left] {$\skin$} (env)
(x1) [->] edge node [above] {$\ell_1$} (skin)
(x2) [->] edge node [above] {$\ell_2$} (skin)
(x3) [->] edge node [left] {$\ell_3$} (x1)
(x4) [->] edge node [right] {$\ell_4$} (x1)
(x5) [->] edge node [left] {$\ell_5$} (x2)
(x6) [->] edge node [right] {$\ell_6$} (x2)
;

\end{tikzpicture}
\caption{A membrane configuration}
\label{fig-conf}
\end{figure}
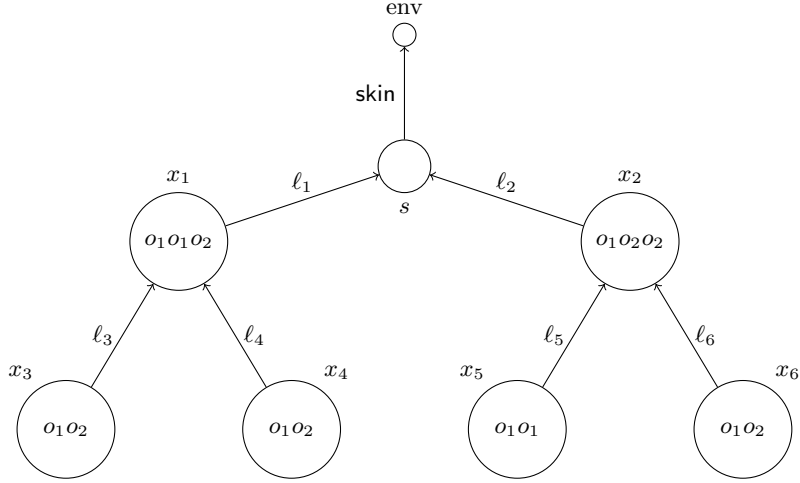
\end{example}

Next, we define a function on the edges (that is, on the membranes) of a configuration and call it \emph{viable transition}. It is used to assign objects to the membranes of a configuration according to the well-known notion of maximal parallelism.

\begin{definition}\label{defi-viable}
Let $G$ be a membrane grammar over $(O,H)$ and $C=(V,E,L,\omega)$ be a membrane configuration. The partial function $f:E\to O\times\{\uparrow,\downarrow\}$ is called a \emph{viable transition}, if it satisfies the following conditions:
\begin{enumerate}
\item[0.] If $f(x,y)=(o,\uparrow)$ for some $x,y\in V$ and $o\in O$, then there has to be a non-evolution rule with the left-hand side $[o]_{L(x,y)}$. Moreover, if $x$ is not a leaf, then this rule cannot be a division rule. If $f(x,y)=(o,\downarrow)$, then there has to be an in-rule with the left-hand side $[]_{L(x,y)} o$.
\item  For each region, there are at least as many objects inside the region who leave it. Formally, for each $x\in V$ and $o\in O$ we have $|\{(y,x)\in E:f(y,x)=(o,\downarrow)\}|$$+$$|\{(x,y)\in E:f(x,y)=(o,\uparrow)\}|$ is at most $\omega(x)(o)$. Note that the second set contains at most one edge, the one leading from $x$ to its parent.
\item For each region, if some object could use the outer membrane (and thus cannot evolve inside the membrane since, by assumption, $G$ has no evolution rules to do so), but the outer membrane is not used, then all those objects are occupied with in-rules. (Maximal parallelism.) Formally, for each edge $(x,y)$ with $f(x,y)$ being undefined, and for each $o\in O$ for which $(o,L(x,y))$ is a dissolution-, division- or out-pair, it has to be the case that $|\{(z,x)\in E:f(z,x)=(o,\downarrow)\}|=\omega(x)(o)$.
\item Similarly, for each region, if some object could use one of the inner membranes but it is not used, then all of those objects are occupied with evolution, or using the outer membrane or one of the inner membranes. Formally, for each edge $(y,x)$ with $f(y,x)$ being undefined, and for each $o\in O$ such that $(o,L(y,x))$ is an in-rule, one of the following cases has to hold, where $(x,z)$ is the outer membrane of the region $x$:
\begin{itemize}
\item $(o,L(x,z))$ is an evolution-pair;
\item or \[|\{(y',x)\in E:f(y',x)=(o,\downarrow)\}|~+~|\{(x,z):f(x,z)=(o,\uparrow)\}|~=~\omega(x)(o).\]
\noindent Again, observe that the second term of the sum is either zero or one.
\end{itemize}
\end{enumerate}
\end{definition}

Notice that viable transitions do not say explicitly which rules are attached to the membranes but, since $G$ has no different rules triggered by the same object for a membrane, we can tell which rules are about to use. A viable transition of the membrane grammar occurring in Example \ref{ex-1}
can be seen in Figure \ref{fig-viable}.

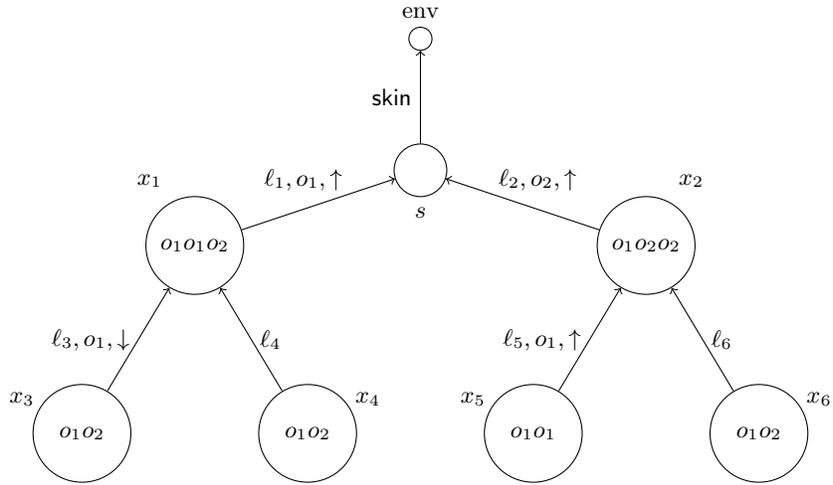
\begin{figure}
\centering \begin{tikzpicture}


\node[draw=black,circle,minimum size=3mm,label=env] (env) at (0,0.75) {};
\node[draw=black,circle,minimum size=7mm,label={[xshift=0mm,yshift=-11mm]$s$}] (skin) at (0,-1) {};
\node[draw=black,circle,minimum size=13mm,label={[xshift=-6mm]$x_1$}] (x1) at (-3,-2) {$o_1o_1o_2$};
\node[draw=black,circle,minimum size=13mm,label={[xshift=6mm]$x_2$}] (x2) at (3,-2) {$o_1o_2o_2$};
\node[draw=black,circle,minimum size=13mm,label={[xshift=-8mm,yshift=-4mm]$x_3$}] (x3) at (-4.5,-4.5) {$o_1o_2$};
\node[draw=black,circle,minimum size=13mm,label={[xshift=8mm,yshift=-4mm]$x_4$}] (x4) at (-1.5,-4.5) {$o_1o_2$};
\node[draw=black,circle,minimum size=13mm,label={[xshift=-8mm,yshift=-4mm]$x_5$}] (x5) at (1.5,-4.5) {$o_1o_1$};
\node[draw=black,circle,minimum size=13mm,label={[xshift=8mm,yshift=-4mm]$x_6$}] (x6) at (4.5,-4.5) {$o_1o_2$};


\path
(skin) [->] edge node [left] {$\skin$} (env)
(x1) [->] edge node [above=2pt,xshift=-2mm] {$\ell_1,o_1,\uparrow$} (skin)
(x2) [->] edge node [above=2pt,xshift=2mm] {$\ell_2,o_2,\uparrow$} (skin)
(x3) [->] edge node [left] {$\ell_3,o_1,\downarrow$} (x1)
(x4) [->] edge node [right] {$\ell_4$} (x1)
(x5) [->] edge node [left] {$\ell_5,o_1,\uparrow$} (x2)
(x6) [->] edge node [right] {$\ell_6$} (x2)
;

\end{tikzpicture}
\caption{A viable transition for the configuration of Example \ref{ex-1}. Here, the applicable rules are $[o_1]_{\ell_1}\to []_{\ell_1}o_2$, $[o_2]_{\ell_2}\to o_1$,	$[]_{\ell_3}o_1\to[o_1]_{\ell_3}$, $[o_1]_{\ell_5}\to[o_2]_{\ell_5}[o_1]_{\ell_5}$ and $[o_2]_{\ell_6}\to [o_1o_1]_{\ell_6}$.
}
\label{fig-viable}
\end{figure}

\begin{definition}
Let $C=(V,E,L,\omega)$ be a membrane configuration and $f:E\to O\times\{\uparrow,\downarrow\}$ be a viable transition for $C$. Then the \emph{computation step} $C\mathop{\vdash}\limits^f C'$ for the membrane configuration $C'$ is defined via several sub-steps discussed below. During the definition, we demonstrate the given steps by an example. We use the membrane grammar given in Example \ref{ex-1} and show the application of its rules on the configuration shown in Figure \ref{fig-conf} according to the viable transition given in Figure \ref{fig-viable}. The example is given via the corresponding figures below.
\begin{enumerate}
\item (Membrane attachments.) First, we remove those objects from the regions that use membranes. The new configuration $C_1=(V,E,L,\omega_1)$ is defined as follows:
	for each region $x\in V$ and object $o\in O$, let
	\begin{align*}
	\omega_1(x)(o)&=\omega(x)(o)\\
	&-|\{(y,x)\in E:f(y,x)=(o,\downarrow)\}|\\
	&-|\{(x,y)\in E:f(y,x)=(o,\uparrow)\}|.
	\end{align*}
Since $f$ is viable for $C$, these values are nonnegative for each $x$ and $o$.
	
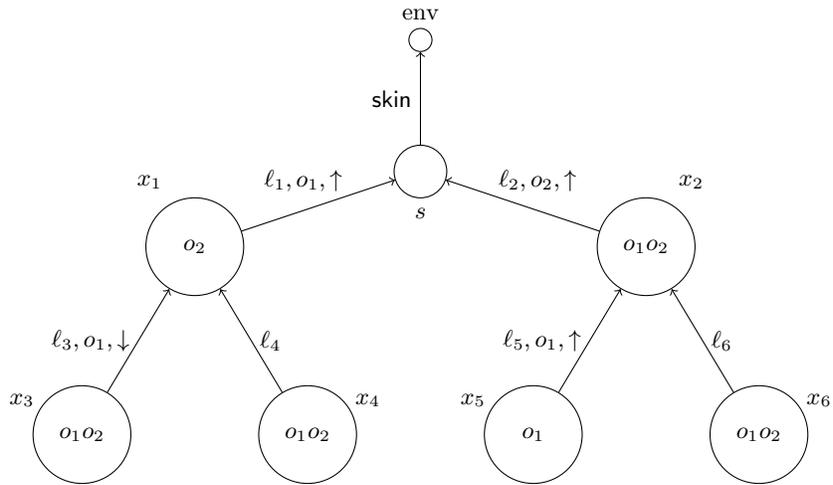
\begin{figure}
\centering \begin{tikzpicture}


\node[draw=black,circle,minimum size=3mm,label=env] (env) at (0,0.75) {};
\node[draw=black,circle,minimum size=7mm,label={[xshift=0mm,yshift=-11mm]$s$}] (skin) at (0,-1) {};
\node[draw=black,circle,minimum size=13mm,label={[xshift=-6mm]$x_1$}] (x1) at (-3,-2) {$o_2$};
\node[draw=black,circle,minimum size=13mm,label={[xshift=6mm]$x_2$}] (x2) at (3,-2) {$o_1o_2$};
\node[draw=black,circle,minimum size=13mm,label={[xshift=-8mm,yshift=-4mm]$x_3$}] (x3) at (-4.5,-4.5) {$o_1o_2$};
\node[draw=black,circle,minimum size=13mm,label={[xshift=8mm,yshift=-4mm]$x_4$}] (x4) at (-1.5,-4.5) {$o_1o_2$};
\node[draw=black,circle,minimum size=13mm,label={[xshift=-8mm,yshift=-4mm]$x_5$}] (x5) at (1.5,-4.5) {$o_1$};
\node[draw=black,circle,minimum size=13mm,label={[xshift=8mm,yshift=-4mm]$x_6$}] (x6) at (4.5,-4.5) {$o_1o_2$};


\path
(skin) [->] edge node [left] {$\skin$} (env)
(x1) [->] edge node [above=2pt,xshift=-2mm] {$\ell_1,o_1,\uparrow$} (skin)
(x2) [->] edge node [above=2pt,xshift=2mm] {$\ell_2,o_2,\uparrow$} (skin)
(x3) [->] edge node [left] {$\ell_3,o_1,\downarrow$} (x1)
(x4) [->] edge node [right] {$\ell_4$} (x1)
(x5) [->] edge node [left] {$\ell_5,o_1,\uparrow$} (x2)
(x6) [->] edge node [right] {$\ell_6$} (x2)
;

\end{tikzpicture}
\caption{After membrane attachment. Object $o_1$ of $x_1$ is attached to $\ell_1$ and one other $o_1$ is attached to $\ell_3$; $o_2$ of $x_2$ is attached to $\ell_2$ and $o_1$ of $x_5$ is attached to $\ell_5$.}
\label{fig-memb_att}
\end{figure}
	
\item (Evolutions.) Next, we apply the evolution rules. The new configuration $C_2=(V,E,L,\omega_2)$ is defined as follows: for each region $x\in V$, let $\omega_1(x)=o_1o_2\ldots o_m$ and $\ell=L(x,y)$ be the label of the outer membrane of $x$. Then define $\omega_2(x)$ as $u_1u_2\ldots u_m$ where \[u_i=\begin{cases}u&\hbox{if there is an evolution rule }[o_i]_\ell\to [u]_\ell\\ o_i&\hbox{otherwise}.\end{cases}\]

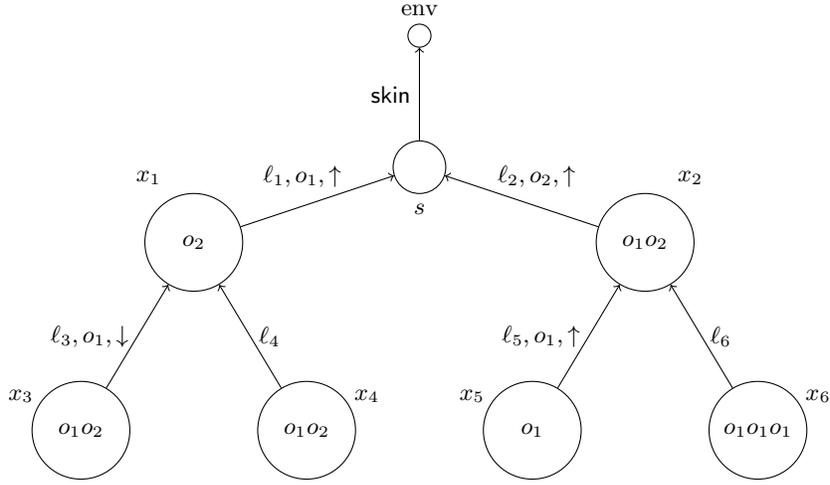
\begin{figure}
\centering \begin{tikzpicture}


\node[draw=black,circle,minimum size=3mm,label=env] (env) at (0,0.75) {};
\node[draw=black,circle,minimum size=7mm,label={[xshift=0mm,yshift=-11mm]$s$}] (skin) at (0,-1) {};
\node[draw=black,circle,minimum size=13mm,label={[xshift=-6mm]$x_1$}] (x1) at (-3,-2) {$o_2$};
\node[draw=black,circle,minimum size=13mm,label={[xshift=6mm]$x_2$}] (x2) at (3,-2) {$o_1o_2$};
\node[draw=black,circle,minimum size=13mm,label={[xshift=-8mm,yshift=-4mm]$x_3$}] (x3) at (-4.5,-4.5) {$o_1o_2$};
\node[draw=black,circle,minimum size=13mm,label={[xshift=8mm,yshift=-4mm]$x_4$}] (x4) at (-1.5,-4.5) {$o_1o_2$};
\node[draw=black,circle,minimum size=13mm,label={[xshift=-8mm,yshift=-4mm]$x_5$}] (x5) at (1.5,-4.5) {$o_1$};
\node[draw=black,circle,minimum size=13mm,label={[xshift=8mm,yshift=-4mm]$x_6$}] (x6) at (4.5,-4.5) {$o_1o_1o_1$};


\path
(skin) [->] edge node [left] {$\skin$} (env)
(x1) [->] edge node [above=2pt,xshift=-2mm] {$\ell_1,o_1,\uparrow$} (skin)
(x2) [->] edge node [above=2pt,xshift=2mm] {$\ell_2,o_2,\uparrow$} (skin)
(x3) [->] edge node [left] {$\ell_3,o_1,\downarrow$} (x1)
(x4) [->] edge node [right] {$\ell_4$} (x1)
(x5) [->] edge node [left] {$\ell_5,o_1,\uparrow$} (x2)
(x6) [->] edge node [right] {$\ell_6$} (x2)
;

\end{tikzpicture}
\caption{After evolutions. Inside $\ell_6$ the object $o_2$ got rewritten to $o_1o_1$.}
\label{fig-evo}
\end{figure}

\item (Movements.) Applying the in- and out-rules we get $C_3=(V,E,L,\omega_3)$ which is defined as follows: for each region $x\in V$ and $o\in O$, let
\begin{align*}
	\omega_3(x)(o)&=\omega_2(x)(o)\\
	&+|\{(y,x)\in E:f(y,x)=(o,\uparrow),(o,L(y,x))\hbox{ is an out-rule}\}|\\
	&+|\{(x,y)\in E:f(x,y)=(o,\downarrow),(o,L(x,y))\hbox{ is an in-rule}\}|.
\end{align*}

\begin{figure}
\centering \begin{tikzpicture}


\node[draw=black,circle,minimum size=3mm,label=env] (env) at (0,0.75) {};
\node[draw=black,circle,minimum size=7mm,label={[xshift=0mm,yshift=-11mm]$s$}] (skin) at (0,-1) {$o_2$};
\node[draw=black,circle,minimum size=13mm,label={[xshift=-6mm]$x_1$}] (x1) at (-3,-2) {$o_2$};
\node[draw=black,circle,minimum size=13mm,label={[xshift=6mm]$x_2$}] (x2) at (3,-2) {$o_1o_2$};
\node[draw=black,circle,minimum size=13mm,label={[xshift=-8mm,yshift=-4mm]$x_3$}] (x3) at (-4.5,-4.5) {$o_1o_1o_2$};
\node[draw=black,circle,minimum size=13mm,label={[xshift=8mm,yshift=-4mm]$x_4$}] (x4) at (-1.5,-4.5) {$o_1o_2$};
\node[draw=black,circle,minimum size=13mm,label={[xshift=-8mm,yshift=-4mm]$x_5$}] (x5) at (1.5,-4.5) {$o_1$};
\node[draw=black,circle,minimum size=13mm,label={[xshift=8mm,yshift=-4mm]$x_6$}] (x6) at (4.5,-4.5) {$o_1o_1o_1$};


\path
(skin) [->] edge node [left] {$\skin$} (env)
(x1) [->] edge node [above=2pt,xshift=-2mm] {$\ell_1$} (skin)
(x2) [->] edge node [above=2pt,xshift=2mm] {$\ell_2,o_2,\uparrow$} (skin)
(x3) [->] edge node [left] {$\ell_3$} (x1)
(x4) [->] edge node [right] {$\ell_4$} (x1)
(x5) [->] edge node [left] {$\ell_5,o_1,\uparrow$} (x2)
(x6) [->] edge node [right] {$\ell_6$} (x2)
;

\end{tikzpicture}
\caption{After movements. The membrane $\ell_1$ released $o_1$ as $o_2$ upwards, $\ell_3$ released $o_1$ as $o_1$ downwards.}
\label{fig-movements}
\end{figure}
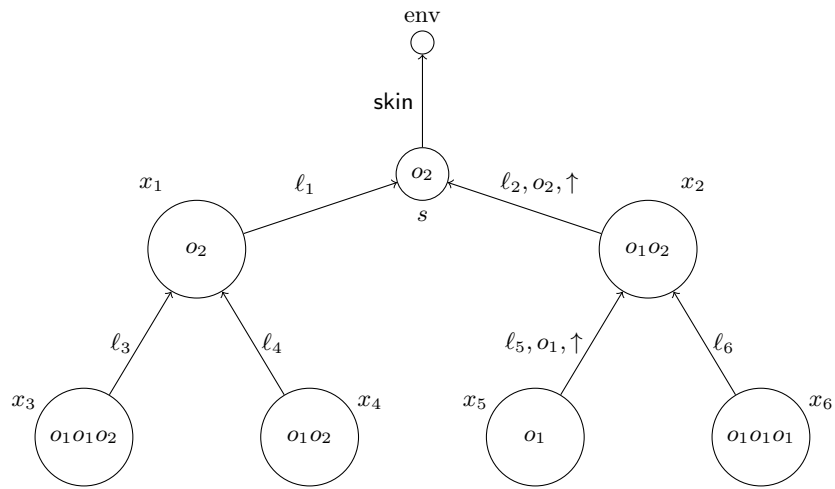

\item (Dissolutions.) Applying the dissolution rules, the current configuration is $C_4=(V_4,E_4,L_4,\omega_4)$ which we define as follows. Initially let $C_4=C_3$. Iterating from the leaves towards the root, we dissolve membranes step by step as follows: if for an edge $(x,y)\in E_4$,  $f(x,y)=(o_1,\uparrow)$ so that $(o_1,L_4(x,y))$ is a dissolution-pair with a rule $[o_1]_{L_{4}(x,y)}\to o_2$ for some $o_1,o_2\in O$, then we set $\omega_4(y)=\omega_4(y)+\omega_4(x)+o_2$. Moreover, for each $z\in V_4$ with $(z,x)\in E_4$, we add a new edge $(z,y)$ to $E_4$, set $L_4(z,y)=L_4(z,x)$,  $E_4=E_4-\{(z,x)\}$. Furthermore, we set $E_4=E_4-\{(x,y)\}$ and $V_4=V_4-\{x\}$. We repeat this process till we handled all the dissolution-marked membranes.

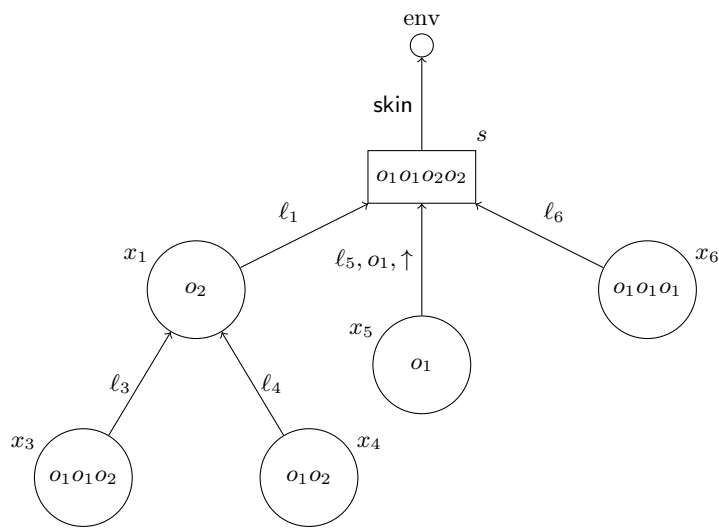
\begin{figure}
\centering \begin{tikzpicture}

\node[draw=black,circle,minimum size=3mm,label=env] (env) at (0,0.75) {};
\node[draw=black,rectangle,minimum size=7mm,label={[xshift=8mm,yshift=0mm]$s$}] (skin) at (0,-1) {$o_1o_1o_2o_2$};
\node[draw=black,circle,minimum size=13mm,label={[xshift=-8mm,yshift=-4mm]$x_1$}] (x1) at (-3,-2.5) {$o_2$};
\node[draw=black,circle,minimum size=13mm,label={[xshift=-8mm,yshift=-4mm]$x_3$}] (x3) at (-4.5,-5) {$o_1o_1o_2$};
\node[draw=black,circle,minimum size=13mm,label={[xshift=8mm,yshift=-4mm]$x_4$}] (x4) at (-1.5,-5) {$o_1o_2$};
\node[draw=black,circle,minimum size=13mm,label={[xshift=-8mm,yshift=-4mm]$x_5$}] (x5) at (0,-3.5) {$o_1$};
\node[draw=black,circle,minimum size=13mm,label={[xshift=8mm,yshift=-4mm]$x_6$}] (x6) at (3,-2.5) {$o_1o_1o_1$};


\path
(skin) [->] edge node [left] {$\skin$} (env)
(x1) [->] edge node [above=2pt,xshift=-2mm] {$\ell_1$} (skin)
(x3) [->] edge node [left] {$\ell_3$} (x1)
(x4) [->] edge node [right] {$\ell_4$} (x1)
(x5) [->] edge node [left] {$\ell_5,o_1,\uparrow$} (skin)
(x6) [->] edge node [above=2pt,xshift=2mm] {$\ell_6$} (skin)
;

\end{tikzpicture}
\caption{After dissolutions. The membrane $\ell_2$ got dissolved under the rule $[o_2]_{\ell_2}\to o_1$.}
\label{fig-dis}
\end{figure}

\item (Divisions.) Finally, $C'=(V',E',L',\omega')$ is defined as follows. Initially let $C'=C_4$. For each membrane $(x,y)\in E'$ such that $f(x,y)=(o,\uparrow)$ and $(o,L'(x,y))$ is a division-pair with a rule $[o]_{L'(x,y)}\to [o_1]_{L'(x,y)}[o_2]_{L'(x,y)}$ for some $o,o_1,o_2\in O$, let $V'=V'\cup\{x'\}$, where $x'$ is a new child of $y$ with $\omega'(x')=\omega'(x)+o_2$ and $L'(x',y)=L'(x,y)$. Then we set $\omega'(x)=\omega'(x)+o_1$.

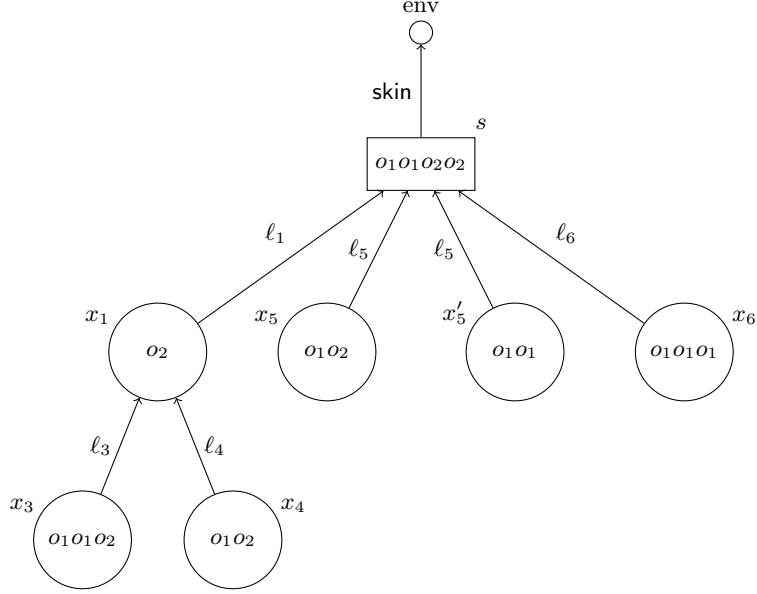
\begin{figure}
\centering \begin{tikzpicture}


\node[draw=black,circle,minimum size=3mm,label=env] (env) at (0,0.75) {};
\node[draw=black,rectangle,minimum size=7mm,label={[xshift=8mm,yshift=0mm]$s$}] (skin) at (0,-1) {$o_1o_1o_2o_2$};
\node[draw=black,circle,minimum size=13mm,label={[xshift=-8mm,yshift=-4mm]$x_1$}] (x1) at (-3.5,-3.5) {$o_2$};
\node[draw=black,circle,minimum size=13mm,label={[xshift=-8mm,yshift=-4mm]$x_3$}] (x3) at (-4.5,-6) {$o_1o_1o_2$};
\node[draw=black,circle,minimum size=13mm,label={[xshift=8mm,yshift=-4mm]$x_4$}] (x4) at (-2.5,-6) {$o_1o_2$};
\node[draw=black,circle,minimum size=13mm,label={[xshift=-8mm,yshift=-4mm]$x_5$}] (x5) at (-1.25,-3.5) {$o_1o_2$};
\node[draw=black,circle,minimum size=13mm,label={[xshift=-8mm,yshift=-4mm]$x'_5$}] (x5') at (1.25,-3.5) {$o_1o_1$};
\node[draw=black,circle,minimum size=13mm,label={[xshift=8mm,yshift=-4mm]$x_6$}] (x6) at (3.5,-3.5) {$o_1o_1o_1$};


\path
(skin) [->] edge node [left] {$\skin$} (env)
(x1) [->] edge node [above=2pt,xshift=-2mm] {$\ell_1$} (skin)
(x3) [->] edge node [left] {$\ell_3$} (x1)
(x4) [->] edge node [right] {$\ell_4$} (x1)
(x5) [->] edge node [left] {$\ell_5$} (skin)
(x5') [->] edge node [left] {$\ell_5$} (skin)
(x6) [->] edge node [above=2pt,xshift=2mm] {$\ell_6$} (skin)
;

\end{tikzpicture}
\caption{After divisions. The membrane $\ell_5$ became divided under the rule $[o_1]_{\ell_5}\to[o_2]_{\ell_5}[o_1]_{\ell_5}$.}
\label{fig-div}
\end{figure}				
	
The membrane configuration we end in up after this last step is $C'$ with $C\mathop{\vdash}\limits^{f}C'$. If there is such a viable $f$, then we write $C\vdash C'$.
\end{enumerate}
\end{definition}



\section{Results}\label{results}
Consider a membrane grammar $G$ and a terminating computation $C_1\vdash\ldots \vdash C_t$ of $G$. In this section, we show that if $G$ does not have in-rules, then $G$ actually uses at most $t$ copies of each object of each membrane of  $C_1$. In other words, we can apply a threshold $t$ on the content of the membranes in $C_1$ without affecting the result of the computation. First, we give a formal definition of applying a threshold on a multiset or a configuration.

When $w$ is a multiset over $O$ and $t\in \N$ is a threshold, then let $w|_t$ be the multiset with \[w|_t(o)=\begin{cases}
w(o)&\hbox{if }w(o)<t\\
t&\hbox{otherwise.}
\end{cases}\]
For a configuration $C=(V,E,L,\omega)$ and a number $t\in\N$, let $C|_t$ denote the configuration we get by applying the threshold $t$ on the content of each region of $C$, hence $C|_t=(V,E,L,\omega')$, where $\omega'(x)=\omega(x)|_t$ for each $x\in V$ (an example can be seen in Figure \ref{fig-conf-threshold}). 
We say that the multisets $w_1,w_2$ over $O$ are $t$-equivalent ($w_1\approx_t w_2$), if $w_1|_t=w_2|_t$.
Similarly, two membrane configurations $C_1$ and $C_2$ are $t$-equivalent, denoted $C_1\approx_t C_2$, if they have the same membrane structure $(V,E,L)$ and for each region $x\in V$, we have $\omega_1(x)\approx_t\omega_2(x)|_t$. Clearly, $C\approx_t C|_t$ since we got $C|_t$ by applying the threshold $t$ on the content of each region of $C$.

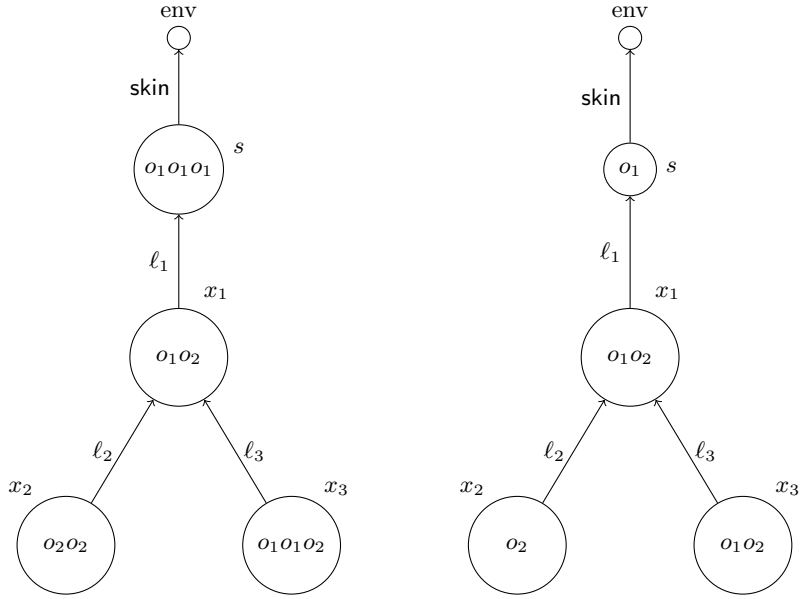
\begin{figure}
\centering \begin{tikzpicture}


\node[draw=black,circle,minimum size=3mm,label=env] (env) at (-3,0.75) {};
\node[draw=black,circle,minimum size=7mm,label={[xshift=8mm,yshift=-5mm]$s$}] (skin) at (-3,-1) {$o_1o_1o_1$};
\node[draw=black,circle,minimum size=13mm,label={[xshift=5mm,yshift=0mm]$x_1$}] (x1) at (-3,-3.5) {$o_1o_2$};
\node[draw=black,circle,minimum size=13mm,label={[xshift=-6mm,yshift=-1mm]$x_2$}] (x2) at (-4.5,-6) {$o_2o_2$};
\node[draw=black,circle,minimum size=13mm,label={[xshift=6mm,yshift=-1mm]$x_3$}] (x3) at (-1.5,-6) {$o_1o_1o_2$};

\node[draw=black,circle,minimum size=3mm,label=env] (env') at (3,0.75) {};
\node[draw=black,circle,minimum size=7mm,label={[xshift=5.5mm,yshift=-5mm]$s$}] (skin') at (3,-1) {$o_1$};
\node[draw=black,circle,minimum size=13mm,label={[xshift=5mm,yshift=0mm]$x_1$}] (x1') at (3,-3.5) {$o_1o_2$};
\node[draw=black,circle,minimum size=13mm,label={[xshift=-6mm,yshift=-1mm]$x_2$}] (x2') at (1.5,-6) {$o_2$};
\node[draw=black,circle,minimum size=13mm,label={[xshift=6mm,yshift=-1mm]$x_3$}] (x3') at (4.5,-6) {$o_1o_2$};


\path
(skin) [->] edge node [left] {$\skin$} (env)
(x1) [->] edge node [left] {$\ell_1$} (skin)
(x2) [->] edge node [left] {$\ell_2$} (x1)
(x3) [->] edge node [right] {$\ell_3$} (x1)

(skin') [->] edge node [left] {$\skin$} (env')
(x1') [->] edge node [left] {$\ell_1$} (skin')
(x2') [->] edge node [left] {$\ell_2$} (x1')
(x3') [->] edge node [right] {$\ell_3$} (x1')
;

\end{tikzpicture}
\caption{Membrane configurations $C$ and $C|_1$}
\label{fig-conf-threshold}
\end{figure}

\begin{theorem}\label{theo1}
Let $G$ be a membrane grammar over $(O,H)$. Assume there are no in-rules in $G$,
$C_1$ and $C_1'$ are $t$-equivalent configurations for some $t>0$, and $C_1\vdash C_2$. Then there exists a configuration $C_2'$ with $C_1'\vdash C_2'$ and $C_2\approx_{t-1}C_2'$.
\begin{proof}
Let us write in more detail $C_1\mathop{\vdash}\limits^{f}C_2$ and let $C_1=(V,E,L,\mu_1)$ and $C_1'=(V,E,L,\mu_1')$. We show that the same function $f$ is viable for $C_1'$ as well, and for the configuration $C_2'$ with $C_1'\mathop{\vdash}\limits^{f}C_2'$ we have $C_2\approx_{t-1} C_2'$.

As there are no in-rules, the conditions for viability become simpler. We check them one by one:
\begin{enumerate}
\item[0.] Since this condition depends only on the rules, it is satisfied for $C_1'$ as well.
\item Since $f$ is viable for $C_1$, we have that for each $x\in V$ and $o\in O$, $|\{(x,y)\in E:f(x,y)=(o,\uparrow)\}|$ is at most $\mu_1(x)(o)$. Since the cardinality of this set is either one or zero and $t>0$, we have that whenever $\mu_1(x)(o)$ is at least one, then so is $\mu'_1(x)(o)$, so this condition is satisfied.
\item As there are no in-rules, the set in the second condition of viability is empty, meaning $\mu_1(x)(o)=0$ for those edges and objects. Hence, by $t$-equivalence we get that $\mu_1'(x)(o)=0$ as well for these pairs $(x,o)$, thus this condition is also satisfied.
\item As there are no in-rules, this condition is also vacuously satisfied.
\end{enumerate}
Thus, $f$ is viable for $C_1'$ as well. Now let $C_2'$ be the configuration with $C_1'\mathop{\vdash}\limits^{f}C_2'$. We show that for each intermediate step in the definition of a computation step, the configurations are $(t-1)$-equivalent.
\begin{enumerate}
\item After the membrane attachment step, since there is no in-rule, $\omega_1(x)(o)$ either remains the same or decreases by one for each region $x$ and object $o$, depending on $f$. Thus, if $\mu_1(x)(o)=\mu_1'(x)(o)$, then they will be the same after attachment; if both of them are at least $t$,
then both of them will be at least $t-1$ after the attachment step.
\item Let $x$ be a region whose outer membrane is labeled by $\ell$. For each $o\in O$, let $u_o$ denote $u\in O^*$ if there is an evolution rule $[o]_\ell\to [u]_\ell$, and $o$ otherwise. After evolution, $\omega_2(x)$ will become $\mathop\sum\limits_{o\in O}\bigl(\omega_1(x)(o)\cdot u_o\bigr)$. Since if $t_1,t_2\geq t$, then $t_1\cdot u\approx_t t_2\cdot u$ for any multiset $u$, we get that since before evolution the (intermediate) configurations were $(t-1)$-equivalent, so are they after evolution.
\item Movements can only increase the contents of a region (in fact, the membrane attachment step is the only one decreasing the content), and by the same amount as they depend only on $f$. Clearly, $u\approx_{t-1}v$ implies $(u+o)\approx_{t-1}(v+o)$ for any object $o$ and multisets $u,v$ over $O$ (actually, if $u'\approx_{t-1}v'$ for any other pair of multisets, then $(u+u')\approx_{t-1}(v+v')$). Thus, after movements the configurations are still $(t-1)$-equivalent.
\item Dissolving a membrane $(x,y)$ with a rule $[o_1]_\ell\to o_2$ increases $\omega_4(y)$ by $\omega_4(x)+o_2$. Since, for arbitrary $(x,y)\in E$, $\mu_1(y)\approx_{t-1}\mu_1'(y)$ and $\mu_1(x)\approx_{t-1}\mu_1'(x)$ by the assumption, moreover $f$ dissolved $x$ according to the same dissolution rule in $C_1$ and $C_1'$ (thus the object on the right-hand side of this rule is the same), $(t-1)$-equivalence is retained after each dissolution.
\item After performing a division by applying a rule $[o]_\ell\to[o_1]_\ell[o_2]_\ell$, the membrane structure will be the same in both configurations. Moreover, if before the division the configurations are $(t-1)$-equivalent, then so they are after the division, and when we insert the two objects $o_1$ and $o_2$, the corresponding regions remain $(t-1)$-equivalent.
\end{enumerate}
Hence, for $C_2'$ we indeed have $C_2\approx_{t-1}C_2'$.
\end{proof}
\end{theorem}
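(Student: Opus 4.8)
The plan is this. Since $C_1\vdash C_2$, fix a viable transition $f$ with $C_1\mathop{\vdash}\limits^{f}C_2$; I will argue that the \emph{same} $f$ is viable for $C_1'$ and that the configuration $C_2'$ with $C_1'\mathop{\vdash}\limits^{f}C_2'$ satisfies $C_2\approx_{t-1}C_2'$. Two features of the setting do most of the work. First, because $G$ has no in-rules, every defined value of $f$ has the shape $(o,\uparrow)$, and among the edges incident to a region at most one such value occurs, namely on the membrane leading to its parent. Second, because $G$ has at most one rule per pair $(o,\ell)$, the set of rules actually fired under $f$ — and hence every structural edit (dissolution, division) and every right-hand side that gets inserted — is a function of $f$ alone, not of the configuration it acts on. In particular $C_2$ and $C_2'$ are guaranteed to have the same membrane structure, since $C_1$ and $C_1'$ do.

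First I would recheck the conditions of Definition~\ref{defi-viable} against $C_1'$. Condition~0 refers only to $G$ and so is inherited verbatim. Condition~1, stripped of in-rules, only demands that whenever $f$ moves an object $o$ up out of region $x$, then $x$ contains at least one copy of $o$; this is a zero-versus-positive distinction, and it is preserved under $t$-equivalence precisely because $t>0$. Condition~2, stripped of in-rules, degenerates to $\mu_1(x)(o)=0$ for the pairs $(x,o)$ it concerns, and again since $t>0$ we have $\mu_1(x)(o)=0$ if and only if $\mu_1'(x)(o)=0$. Condition~3 speaks only of in-rules and is vacuous. Hence $f$ is viable for $C_1'$, and $C_2'$ is well defined.

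It remains to run the five sub-steps of the computation step (membrane attachment, evolution, movement, dissolution, division) simultaneously on $C_1$ and on $C_1'$ under the common $f$, keeping as an invariant that the two intermediate configurations have the same membrane structure and are $(t-1)$-equivalent region by region. The one sub-step that costs a level of the threshold is membrane attachment: it subtracts from each coordinate an amount in $\{0,1\}$ determined solely by $f$, and subtracting the same $\delta\in\{0,1\}$ from two $t$-equivalent counts leaves two $(t-1)$-equivalent counts. From there on the invariant is only maintained, never degraded: evolution replaces $\omega_1(x)$ by $\sum_{o\in O}\bigl(\omega_1(x)(o)\cdot u_o\bigr)$, for which I would use that $\approx_{t-1}$ is a congruence for multiset addition and that $n\approx_{t-1}m$ forces $n\cdot u\approx_{t-1}m\cdot u$ for every multiset $u$; movement, and the insertions performed during dissolution and division, add amounts fixed by $f$ and the rules and so are absorbed by the congruence property; and the tree edits made during dissolution and division are identical on the two sides. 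Processing the dissolved and divided membranes in the prescribed leaves-to-root order, the invariant survives to the end, giving $C_2\approx_{t-1}C_2'$. I expect no genuine obstacle here; the only delicate point is the evolution step, where one must observe that scaling a coordinate by a positive entry of $u_o$ cannot pull an above-threshold count back below the threshold — and it is exactly the single subtraction inside membrane attachment that accounts for the loss from $t$ to $t-1$, which is why the conclusion cannot in general be strengthened to $\approx_t$.
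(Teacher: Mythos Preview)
Your proposal is correct and follows essentially the same approach as the paper: reuse the same viable transition $f$ for $C_1'$, verify the viability conditions (simplified by the absence of in-rules and the fact that $t>0$ makes zero-versus-positive distinctions robust), and then walk through the five sub-steps showing that membrane attachment is the single place where the threshold drops from $t$ to $t-1$, while the remaining steps preserve $(t-1)$-equivalence via the congruence of $\approx_{t-1}$ under multiset addition. One small imprecision: in the dissolution sub-step the amount added to the parent is $\omega_4(x)+o_2$, and the $\omega_4(x)$ part is \emph{not} fixed by $f$ and the rules alone but rather is itself $(t-1)$-equivalent by the invariant --- your congruence framework still covers this, but your phrasing ``add amounts fixed by $f$ and the rules'' undersells what is needed there.
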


Using Theorem \ref{theo1}, we can show that for a computation $\cal C$ which terminates in $t$ steps, we can give another computation $\cal C'$ such that the following holds. The number of each object in the first configuration of $\cal C'$ is bounded by $t$ and the environment at the end of both computations contains the same objects.
\begin{corollary}\label{cor1}
Let $\mathcal{C}=C_1\vdash\ldots\vdash C_t$ be a terminating computation for some $t>0$ and let $C'_1=C_1|_t$. Then there exists another terminating computation $\mathcal{C}'=C'_1\vdash\ldots\vdash C'_t$ with $C_t\approx_1 C'_t$.
\end{corollary}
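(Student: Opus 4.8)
The plan is to build $\mathcal{C}'$ by applying Theorem~\ref{theo1} repeatedly, once for each of the $t-1$ steps of $\mathcal{C}$, letting the threshold decrease by one at every step. First I would record the base case: since $C_1'=C_1|_t$, the remark preceding Theorem~\ref{theo1} gives $C_1\approx_t C_1'$ (and also, automatically, the multiplicity of every object in every region of $C_1'$ is at most $t$, as required by the statement preceding the corollary).

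Next I would prove, by induction on $i\in[t]$, that there exist configurations $C_1',\ldots,C_i'$ with $C_1'\vdash C_2'\vdash\cdots\vdash C_i'$ and $C_i\approx_{t-i+1}C_i'$. The case $i=1$ is the base case above. For the inductive step, assume $C_i\approx_{t-i+1}C_i'$ has been constructed for some $i<t$; then $t-i+1\ge 2>0$, so Theorem~\ref{theo1} applies with threshold $t-i+1$, and since $C_i\vdash C_{i+1}$ is a step of $\mathcal{C}$, it produces a configuration $C_{i+1}'$ with $C_i'\vdash C_{i+1}'$ and $C_{i+1}\approx_{t-i}C_{i+1}'$, which is exactly the claim for $i+1$. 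Taking $i=t$ yields a computation $\mathcal{C}': C_1'\vdash\cdots\vdash C_t'$ with $C_t\approx_1 C_t'$. Note that the thresholds invoked run only through $t,t-1,\ldots,2$, so Theorem~\ref{theo1} is never used with threshold $0$.

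It remains to check that $\mathcal{C}'$ is a \emph{terminating} computation, i.e.\ that $C_t'$ is halting. Since $\mathcal{C}$ terminates, no rule of $G$ is applicable in $C_t$: for every region $x$ and every object $o$ occurring in $x$ with positive multiplicity, there is no rule of $G$ with left-hand side $[o]_{L(x,y)}$, where $(x,y)$ is the outer membrane of $x$ (recall $G$ has no in-rules). Now $C_t\approx_1 C_t'$ means that $C_t$ and $C_t'$ have the same membrane structure $(V,E,L)$ and, in each region, exactly the same set of objects with positive multiplicity; hence no rule of $G$ is applicable in $C_t'$ either, so $C_t'$ is halting. The argument is thus essentially bookkeeping; the only points requiring a moment's care are keeping the thresholds positive (handled by the range $t,\ldots,2$ above) and the observation that $1$-equivalence preserves haltingness, which holds because, in the absence of in-rules, applicability of a rule in a region depends only on which objects are present there and on the label of the outer membrane. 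I do not foresee a genuine obstacle.
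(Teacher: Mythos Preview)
Your proof is correct and follows essentially the same inductive route as the paper: build $C_i'$ step by step via Theorem~\ref{theo1}, maintaining $C_i\approx_{t-i+1}C_i'$, and read off $C_t\approx_1 C_t'$ at $i=t$. In fact your argument is slightly more complete than the paper's, since you explicitly verify that $\mathcal{C}'$ is terminating by observing that $1$-equivalence preserves the set of applicable rules (the paper asserts ``terminating'' but does not spell this out).
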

\begin{proof}
We prove more: there exists a terminating computation $\mathcal{C}'=C'_1\vdash\ldots\vdash C'_t$ such that $C_i\approx_{t-i+1} C'_i$ for each $i\in[t]$.
We prove this by induction on $i$.

If $i=1$, then we have $C_1\approx_t C'_1$. If $i>1$, then by induction we have $C_{i-1}\approx_{t-(i-1)+1} C'_{i-1}$. Since $C_{i-1}\vdash C_i$, using Theorem \ref{theo1} we get that there exists a configuration $C'_i$ with $C'_{i-1}\vdash C'_{i}$ and $C_i\approx_{t-i+1}C'_i$.

Thus $C_t\approx_1 C'_t$, which concludes the proof of the statement. 
\end{proof}


\section{Conclusions}\label{conclusion}
Consider a membrane grammar $G$ over $(O,H)$ such that $G$ has no in-rules. By the iterated application of Corollary \ref{cor1}, we can simulate a terminating computation $C_1\vdash\ldots \vdash C_t$ of $G$ as follows. We compute a configuration sequence $D_1,D_2,\ldots, D_t$ such that, for each $i\in [t]$, $D_i\approx_{t-i+1} C_i$ and each membrane in $D_i$ contains only a polynomial number of objects in $|O|+t$. On the other hand, these configurations can contain exponentially many membranes. 
Nevertheless, we believe that our result can be used to give polynomial-time simulations of certain variants of polarizationless P systems with active membranes. For example, in \cite{gazKol19}, a novel method was given to simulate simple polarizationless P systems efficiently. The simulated P systems are such that they have only one membrane in the skin membrane at the beginning of the computation. To extend the simulation given in \cite{gazKol19} to P systems having an arbitrary membrane structure, we need that those membranes that become elementary during the computation contain polynomially many objects. By the results of this paper, we have a chance to achieve this property. The elaboration of the details is a topic for future work. 

Moreover, we think that the proof of Theorem \ref{theo1} can be extended to membrane grammars having more general rules, such as rules dividing non-elementary membranes and rules having polarizations or the possibility of changing the labels of membranes.



\begin{thebibliography}{10}
\providecommand{\url}[1]{\texttt{#1}}
\providecommand{\urlprefix}{URL }
 


\bibitem{alhMarPan03} Alhazov, A., Mart\'\i n-Vide, C., Pan, L.: Solving a PSPACE-complete problem by P systems with restricted active membranes. Fundamenta Informaticae {\bfseries 58}, 67--77 (2003) 

\bibitem{alhPanPau04}
Alhazov, A., Pan, L., P\u aun, Gh.: Trading polarizations for labels in {P} systems
  with active membranes. Acta Informatica  {\bfseries 41}(2-3), 111--144 (2004)

\bibitem{alhPer07}
Alhazov, A., P\'erez-Jim\'enez, M.J.: Uniform solution of QSAT using polarizationless active membranes. International Conference on Machines, Computations and Universality, 122-133 (2007) 

\bibitem{corGutPerRis05} Cordón-Franco, A., Gutiérrez-Naranjo, M.A., Pérez-Jiménez, M.J., Riscos-Núñez, A.: Exploring computation trees associated with P systems. In: Mauri, G., Paun, Gh., Pérez-Jiménez, M.J., Rozenberg, G., Salomaa, A. (eds.) Membrane Computing,
5th International Workshop, WMC 2004, LNCS vol. 3365, 278-–286 (2005) 


\bibitem{gaz04}
Gazdag, Z.: Solving SAT by P systems with active membranes in linear time in the number of variables.
In: Alhazov, A., Cojocaru, S., Gheorghe, M., Rogozhin, Y., Rozenberg, G., Salomaa, A. (eds.) Membrane Computing: 14th International Conference, LNCS vol. 8340, 189--205 (2014)

\bibitem{gazKol12}
Gazdag, Z., Kolonits, G.: A new approach for solving {SAT} by {P} systems with
  active membranes. In: Csuhaj-Varj{\'u}, E., Gheorghe, M., Rozenberg, G., Salomaa, A., Vaszil, G. (eds.) Membrane Computing: 13th International Conference, LNCS vol. 7762, 195--207 (2013) 

\bibitem{gazKol19} Gazdag, Zs., Kolonits, G.: A new method to simulate restricted variants of polarizationless P systems with active membranes. J. Membr. Comput. 1(4): 251--261 (2019)


\bibitem{gutPerRisRom06}
Gutierrez-Naranjo, M.A., Perez-Jimenez, M.J., Riscos-N\'u\~nez, A., Romero-Campero, F.J.: On the power of dissolution in P systems with active membranes. In: Freund, R., P\u aun, Gh., Rozenberg, G., Salomaa, A. (eds.) Membrane Computing: 6th International
Workshop, LNCS vol. 3850, 224--240 (2006) 



\bibitem{kriRam99}
Krishna, S.N., Rama, R.:  A variant of P systems with active membranes: Solving NP-complete problems. Romanian Journal of Information Science and Technology, 2, 4, 357--367 (1999) 

\bibitem{lepManMauPorZan17}
Leporati, A., Manzoni, L., Mauri, G., Porreca, A.E., Zandron, C.: Solving a special case of the P conjecture using dependency graphs with dissolution. In: Gheorghe, M., Rozenberg, G., Salomaa, A., Zandron, C. (eds.) Membrane Computing: 18th International Conference, LNCS vol. 10725, 196-213 (2017) 



\bibitem{murWoo07}
Murphy, N., Woods, D.: Active membrane systems without charges and using only symmetric elementary division characterise P. In: Eleftherakis, G., Kefalas, P., P\u aun, Gh., Rozenberg, G., Salomaa, A. (eds.) Membrane Computing: 8th International Workshop, LNCS vol. 4860, 367--384 (2007) 


\bibitem{PanAlh06}
Pan, L., Alhazov, A.: Solving HPP and SAT by P Systems with Active Membranes and Separation Rules. Acta Informatica {\bfseries 43}(2), 131--145 (2006)


\bibitem{PanAlhIsd04}
Pan, L., Alhazov, A., Ishdorj, T.-O.: Further remarks on P systems with
active membranes, separation, merging, and release rules. Soft Computing {\bfseries 9}(9), 686--690 (2004) 


\bibitem{pau01}
P\u aun, Gh.: {P systems with active membranes: attacking NP-complete problems}.
Journal of Automata, Languages and Combinatorics  {\bfseries 6}(1), 75--90 (2001)  

\bibitem{pau05}
P\u aun, Gh.: Further twenty six open problems in membrane computing.
In: Third Brainstorming Week on Membrane Computing. F\'enix Editora, Sevilla, 249--262 (2005) 

\bibitem{HandbookMC10}
P\u{a}un, Gh., Rozenberg, G., Salomaa, A. (eds.): The {O}xford {H}andbook of
  {M}embrane {C}omputing.
\newblock Oxford University Press, Oxford, England (2010)


\bibitem{Perez-Jimenez2005}
P\'erez-Jim\'enez, M.J., Romero-Campero, F.J.: Trading polarization for
bi-stable catalysts in P systems with active membranes.  In: Mauri, G., P\u aun, Gh., Pérez-Jiménez, M.J., Rozenberg, G., Salomaa, A. (eds.) Membrane Computing: 5th International Workshop, LNCS vol. 3365, 373--388 (2005) 



\bibitem{perJimSan03}
P{\'e}rez-Jim{\'e}nez, M.J., Romero-Jim{\'e}nez, {\'A}., Sancho-Caparrini, F.:
  Complexity classes in models of cellular computing with membranes. Natural
  Computing  {\bfseries 2}(3), 265--285 (2003)  

\bibitem{PerRomSan06}
P{\'e}rez-Jim{\'e}nez, M.J., Romero-Jim{\'e}nez, {\'A}., Sancho-Caparrini, F.:
A polynomial complexity class in {P} systems using membrane division.
Journal of Automata, Languages and Combinatorics \textbf{11}(4), 423--434 (2006)

\bibitem{sos03}
Sos\'\i k, P.: The computational power of cell division in P systems. Natural Computing {\bfseries 2}(3), 287--298 (2003)

\bibitem{sos19} Sosík, P.: P systems attacking hard problems beyond NP: a survey. J. Membr. Comput. 1, 198–208 (2019)

\bibitem{sosRod07}
Sos\'\i k, P., Rodríguez-Patón, A.: Membrane computing and complexity theory: A characterization of PSPACE. Journal of Computer and System Sciences {\bfseries 73}(1), 137--152 (2007)



\bibitem{wooMurPerRis09}
Woods, D., Murphy, N., P\'erez-Jim\'enez, M.J., Riscos-N\'u\~nez, A.: Membrane dissolution and division in P. In: Calude, C.S., da Costa, J.F.G., Dershowitz, N., Freire, E., Rozenberg, G. (eds.) Unconventional Computation: 8th International Conference, LNCS vol. 5715, 262--276 (2009) 

\bibitem{zanFerMau01}
Zandron, C., Ferretti, C., Mauri, G.: Solving NP-complete problems using P systems with active membranes. In: Unconventional Models of
Computation, UMC’2K: Proceedings of the Second International Conference on Unconventional Models of Computation. Springer
London, London, 289--301 (2001) 

\end{thebibliography}
\end{document}